\newtheorem{lemma}{Lemma}
\newcommand{\secref}{Section~\ref}
\newcommand{\appref}{Appendix~\ref}
\newcommand{\figref}[2][{}]{\figurename~\ref{#2}#1}
\newcommand*\rmd{\mathop{}\!\mathrm{d}}
\newcommand*\rmi{\mathrm{i}}
\newcommand{\ddif}[2]{\dfrac{\rmd #1}{\rmd #2}}
\newcommand{\pdif}[2]{\dfrac{\partial #1}{\partial #2}}
\begin{document}


\title{Island formation in collisionless kinetic plasmas with perturbed orbits}
\author{B. J. Q. Woods}
	\affiliation{School of Mathematics, University of Leeds, Woodhouse Lane, Leeds, LS2 9JT, United Kingdom}
	\affiliation{Department of Physics, York Plasma Institute, University of York, Heslington, York, YO10 5DD, United Kingdom}

\date{\today}

\begin{abstract}
In the vinicity of phase-space resonances for a given species of plasma, the particle distribution function is flattened as free energy is exchanged between the plasma and resonant electromagnetic waves. Here, we present action-angle variables which explicitly separate the adiabatic invariant from the contribution that arises from perturbation of the Hamiltonian over the course of a single orbit. Then, we perform similar analysis for the orbit period, allowing one to identify a generating function $\psi$ which modifies the zeroth order contribution to the orbit period (unperturbed orbit) in the case where the particle energy is not time-invariant. We posit that a population of `quasi-trapped' particles cross the separatrix, directly allowing for island growth and decay. Then, we employ $\psi$ in the ensemble case by considering how this generating function allows for solutions of the 6+1D electrostatic Vlasov equation where finite growth and frequency sweeping of modes occur. Finally, we derive an approximate form for $\psi$ outside of the separatrix, allowing for qualitative observation of phase-space shear and island formation.
\end{abstract}

\maketitle



\section{Introduction}
\label{sec:intro}
The Boltzmann equation \cite{wesson2011tokamaks} has been widely explored in plasma physics to kinetically model plasma in linear and nonlinear regimes. The collisionless variant (the Vlasov equation) \cite{vlasov1938vibration} is widely explored due to the low collisionality sometimes found in these systems, when long-range effects dominate over small scale ballistic effects.

Nonlinear solutions related to those given by Bernstein, Greene and Kruskal (so called BGK modes) are widely explored in the literature. \cite{bernstein1957exact,chen2002BGK} Extension to the theory has been explored by Berk and Breizman, where a phase space `hole' and `clump' (relative decrease and increase respectively on the particle distribution function) leads to frequency sweeping. \cite{berk1995numerical,berk1996nonlinear,berk1997spontaneous} These solutions are of particular interest to the fusion community, as Alfv\'{e}nic frequency sweeping is strongly correlated with fast ion loss in tokamaks. \cite{heidbrink2008basic,chen2016physics,duarte2018study,todo2019introduction}

\begin{figure}[h!]
	\centering
	\includegraphics[width=0.4\textwidth]{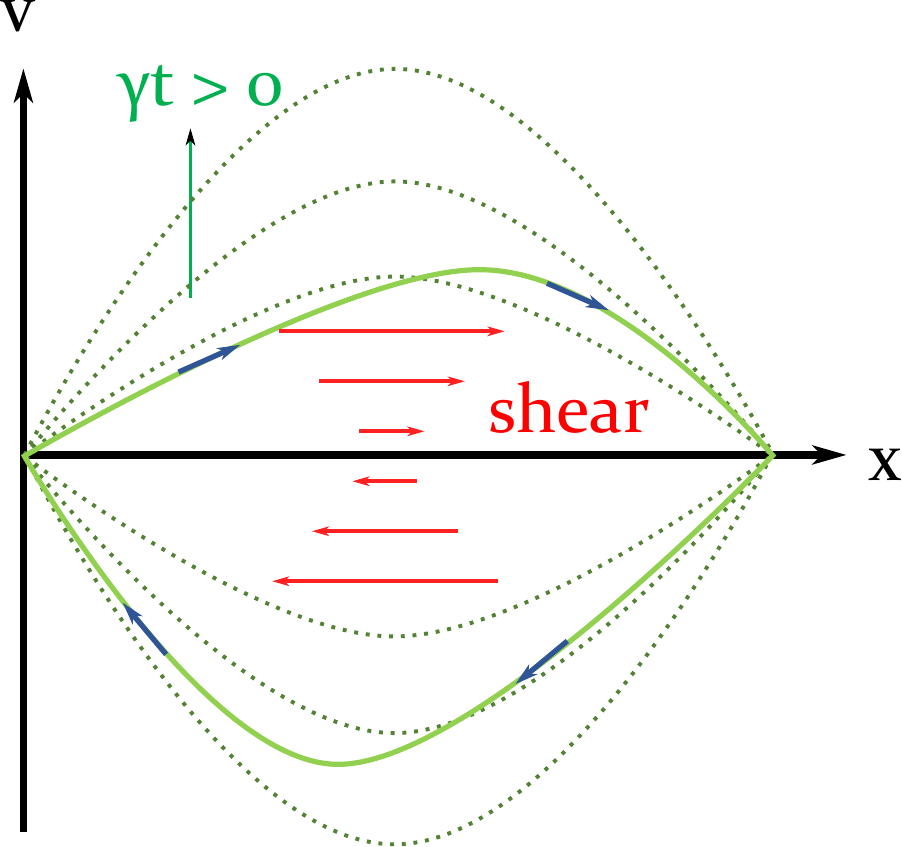}
	\caption{\textbf{Phase space shear induced by finite $\gamma$.} A passing or quasi-passing particle (see \secref{sec:period}) interacting with a wave undergoing growth or decay can be thought of as moving across a continuum of conservative orbits. The resultant orbit becomes elongated close to the X points of the orbit, resulting in phase space shear. The direction of the shear is related to the sign of the product $q\gamma$; as the sign of $q\gamma$ changes, the sign of the shear is also inverted.}
	\label{fig:intro_shear}
\end{figure}

Resonant particles exhibit closed orbits, forming phase islands which are populated as energy is exchanged with the wave, and depleted via collisions and other dissipative mechanisms. One expects that from the Vlasov equation, the spatial advection of the distribution function ($\mathbf{v} \cdot \nabla f$) contributes to phase-space shear during mode growth/decay (see \figref{fig:intro_shear}). As a result of shear, the distribution function diffuses in $\mathbf{k}$-space, a manifestation of the inherent nonlinear dynamics. BGK modes do not experience phase-space shear, and therefore the theory cannot fully capture mode growth/decay, and consequently island formation.

Berk-Breizman models and extensions thereof rely on an assumption of a `continuum of BGK modes' such that the action variables for the system are approximately adiabatic invariants. The evolution of waves in systems near marginal stability during near-constant wave amplitude has therefore been heavily explored, allowing for analytical theory of the temporal behaviour of marginally unstable modes after nonlinear saturation. \cite{lesur2010berk,degol2010nonlinear,hezaveh2017impact,dudkovskaia2019stability}

These systems are largely examined computationally, with work including phase-space island formation \cite{lilley2014formation}, stochastic phase-space island destruction \cite{woods2018stochastic}, and phase-space island evolution in tokamak geometries \cite{lang2011nonlinear,wang2013radial,meng2018resonance}. Further theory describing the growth and decay of these structures may allow for better understanding of how to modify the lifetime of phase-space islands, as well as how to manipulate holes and clumps from an `excitation and recombination' point of view, analogous to hole-electron pairs as widely explored in solid state physics. \cite{kittel1953introduction} This may in turn allow for better understanding of mode avalanching in tokamaks, which feasibly could be caused by spontaneous creation of large populations of holes or clumps (analogous to population inversion in lasers).

In \secref{sec:action}, we present action-angle variables which are explicitly represented as an adiabatic invariant, and a finite perturbation that arises due to variation of the action-angle variables during the course of an orbit. We then show that `adiabaticity' as commonly described in the literature \cite{berk1997spontaneous} is formally given by the case where the finite perturbations vary slower than deformations to the Hamiltonian.

In \secref{sec:period}, we show that by considering single particle orbits under the Lorentz force, one naturally obtains a first-order nonlinear ordinary differential equation (ODE). In previous work, we examined solutions to the Boltzmann equation in single species plasma where the 1D particle distribution function is given by a functional $f = f[\epsilon]$, where $\epsilon(x,v,t)$ is for a single wave, equal to the single particle energy at a resonance $v = v_{\textrm{res.}}$.\cite{woods2019analytical} Here, we show that the first-order nonlinear ODE can be reformulated instead as defining an energy-like quantity $\epsilon$, given by the sum of the single particle energy and a generating function $\psi$. We show that the orbit period $\tau$ can be represented as an adiabatic part (which one can identify as the unperturbed orbit), and a perturbation. In the case where the perturbation is small, $\tau$ is approximately given by the zeroth order contribution, which we refer to as a `near-conservative orbit'. We therefore posit that the generating function $\psi$ may allow for a subpopulation of particles to cross the separatrix, leading to island formation.

In \secref{sec:generating} we use a nonlinear Laplace decomposition of the electric field to derive a form for $\psi$ that approximately solves the Vlasov equation outside of island separatrices. Finally, in \secref{subsec:generating_nowave}, we show that the resultant phase-space contours share similarities with the particle orbits found in \secref{sec:period}, additionally exhibiting phase-space shear as expected during mode growth/decay.

\section{Action-angle variables}
\label{sec:action}

Action-angle variables allow one to deduce the form of the Hamiltonian in systems where the Hamiltonian is slowly evolving, and are commonly utilised in tokamak physics \cite{white2014theory,duarte2017quasilinear}. Here, we will examine the following canonical transformation of the Hamiltonian $H(\mathbf{q},\mathbf{p},t)$ which is enabled by a type-2 generating function $Q_2(\mathbf{q},\mathbf{J},t)$ \cite{goldstein1950classical}:

\begin{subequations}
\begin{align}
w_k &= \pdif{Q_2}{J_k} \\
p_k &= \pdif{Q_2}{q_k} \\
K(\mathbf{w},\mathbf{J},t) &= H(\mathbf{q},\mathbf{p},t) + \pdif{Q_2}{t}
\end{align}
\end{subequations}

where $\mathbf{J} = \{J_k\}$ are the action variables in the system, and $\mathbf{w} = \{w_k\}$ are the angle variables. The new Hamilton's equations are given by:

\begin{equation}
\label{eq:action_hamiltons}
\ddif{}{t}g(\mathbf{w},\mathbf{J},t) = \{g, K\} + \pdif{g}{t}
\end{equation}

where $\{g,K\}$ denotes the Poisson bracket in the new canonical phase space:

\[
\{g,K\} := \sum\limits_k \pdif{g}{w_k} \pdif{K}{J_k} -  \pdif{K}{w_k} \pdif{g}{J_k}
\]

Here, we define the following new generalized momenta $\{J_k\}$:

\begin{equation}
J_k := \delta J_k(\mathbf{q},t) + \oint\limits_{H(q_k,p_k,t) = \textrm{const.}} p_k \rmd q_k
\end{equation}

where the integral is performed over a closed phase space orbit at time $t$ (such that it is the unperturbed orbit), and the new generalized momenta are constants of motion. Our definition of the action variables differs here from the literature by explicitly separating the unperturbed orbit and the perturbation to the action. As such, the last term on the right hand side represents an adiabatic invariant, while the first term on the right hand side represents deviation from adiabaticity. 

One can further demand that the new Hamiltonian is not explicitly a function of time. To achieve this, one must use a generating function such that:

\begin{equation}
\label{eq:generating_restriction}
\pdif{}{t} \left( \pdif{Q_2}{t} + H \right) = 0
\end{equation}

From the new Hamilton's equations for $\{w_k\}$ and $\{J_k\}$ given by \eqref{eq:action_hamiltons}, if $\delta J_k$ is small and $J_k$ is approximately constant over one period:

\begin{align*}
w_k = \int\limits_{0}^{t} \pdif{}{J_k}K(\mathbf{J}) \rmd t + \textrm{const.} &;& 0 \approx - \pdif{K}{w_k}
\end{align*}

where one should note that $K$ cannot be a function of $\mathbf{w}$. The canonical transformation is therefore such that the new Hamiltonian is solely a function of the constants of motion. Each one is representable by a quantity which resembles classical action, plus a perturbation. If one integrates over a full period $\tau_{(0)}(\mathbf{J})$ of the orbit at time $t$:

\[
\Delta w_k \approx \pdif{}{J_k}K(\mathbf{J}) \tau_{(0)} (\mathbf{J},t)
\]

But one can also represent the variation of the angle variable $w_k$ over one orbit by:

\[
\Delta w_k = \delta w_k(\mathbf{J},t) + \oint\limits_{H(q_k,p_k,t) = \textrm{const.}} \pdif{w_k}{q_k} \rmd q_k
\]

such that $\Delta w_k$ is constructed from the unperturbed orbit contribution (where $C : H(q_k, p_k, t) = \textrm{const.}$), plus the variation arising from the orbit changing as a function of time. Then:

\[
\Delta w_k = 1 + \delta w_k
\]

and accordingly:

\[
\pdif{}{J_k}K(\mathbf{J}) \approx \dfrac{1 + \delta w_k(\mathbf{J},t)}{\tau_{(0)} (\mathbf{J},t)}
\]

such that the left hand side is equal to the bounce frequency $(1 / \tau_{(0)})$ plus a contribution arising from temporal perturbations to the particle orbits. The angle variables are therefore given by:

\[
w_k \approx \dfrac{1 + \delta w_k(\mathbf{J},t)}{\tau_{(0)} (\mathbf{J},t)} t + \textrm{const.}
\]

Altogether, this is simply a representation of Noether's theorem. The continuous symmetries here are that $K$ is invariant under a temporal transformation or a translation in $w_k$. The former leads to conservation of energy, and the latter manifests with $J_k$ as constants of motion. One finds that the Hamiltonian can be represented in the form:

\[
H(\mathbf{q},\mathbf{p},t) = K(\mathbf{J}) + \delta H(\mathbf{q},\mathbf{J},t)
\]

such that $\delta H := -\partial Q_2 / \partial t$ represents a explicitly time-varying perturbation to the Hamiltonian. Here, we formally define adiabatic invariants as:

\begin{equation}
\boxed{
\textrm{adiabatic} \, : \, \pdif{}{t} \ln \left(J_k - \delta J_k\right) \ll \pdif{}{t} \ln H
}
\end{equation}

such that the logarithmic rate of change of $(J_k - \delta J_k)$ is small compared to the Hamiltonian. Under such an approximation, the orbits are approximately temporally static. In such a case, the Hamiltonian is approximately time-independent. The Hamiltonian then would take the approximate form:

\[
H(\mathbf{q},\mathbf{p},t) \approx H_0(\mathbf{J})
\]

allowing for $H$ to be representable solely as a function of the constants of motion. For example, in tokamaks the equilbrium Hamiltonian (which can be identified as $K(\mathbf{J})$) is commonly taken to be of the form:

\[
H_0 = H_0(W, p_{\varphi}, \mu)
\]

where $W$ is the particle energy, $p_{\varphi}$ is the toroidal angular momentum, and $\mu$ is the particle magnetic moment. The corresponding conjugate quantities for which continuous symmetries approximately exist are time, azimuthal angle, and the gyroangle.

\section{Period of perturbed orbits}
\label{sec:period}

\begin{figure}[t!]
	\centering
	\includegraphics[width=0.8\textwidth]{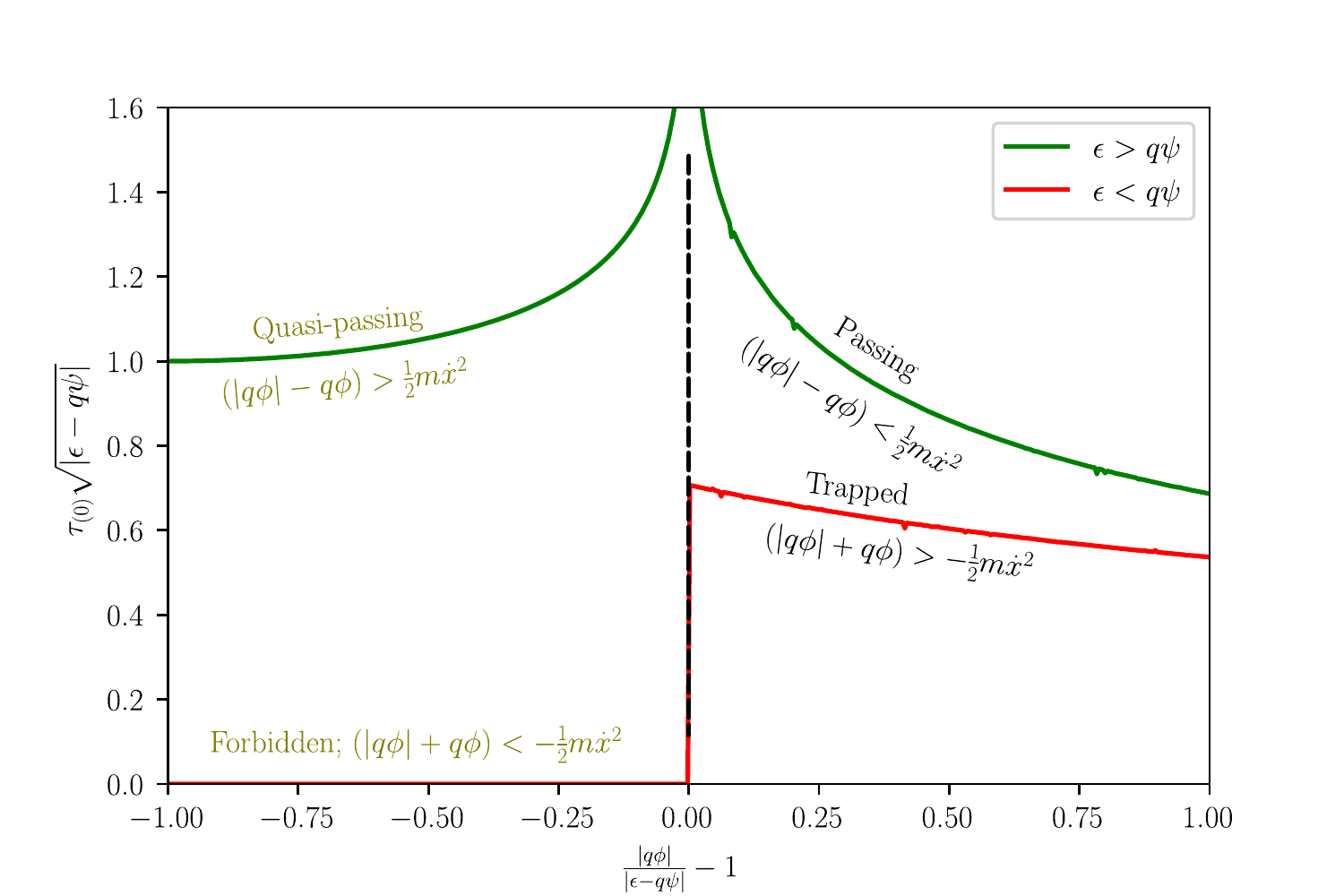}
	\caption{\textbf{Zeroth-order period $\tau_{(0)}$ for monochromatic waves as a function of \emph{wave amplitude}}. Particles with $\epsilon > q\psi$ are passing, while particles with $\epsilon < q\psi$ are trapped, directly continuing from the case where $q\psi = 0$ (conservative orbits). We categorize the `trapped' region of the plot to enable continuity from $q\psi = 0$, in agreement with the requirement that a particle would need an infinite amount of time to cross the barrier at $|q\phi| = |\epsilon - q\psi|$. The zeroth-order period $\tau_{(0)}$ decreases with wave amplitude $|q\phi|$ in the trapped region. In the passing region the period increases with wave amplitude, but in the quasi-passing region the period decreases with wave amplitude.}
	\label{fig:period_passing}
\end{figure}

As particles traverse phase space, they see an electric potential which varies in time and space. If one examines the Lorentz force for the $i^{\textrm{th}}$ particle:

\[
m \ddot{\mathbf{x}}^{(i)} = q [\mathbf{E} + \dot{\mathbf{x}}^{(i)} \times \mathbf{B}]_{\mathbf{x} = \mathbf{x}^{(i)}}
\]

where $m$ is the particle mass, $q$ is the particle charge, $\mathbf{E}(\mathbf{x},t) \in \mathbb{R}^3$ is the electric field, $\mathbf{B}(\mathbf{x},t) \in \mathbb{R}^3$ is the magnetic flux density, $\mathbf{x} \in \mathbb{R}^3$ is position, $t \in \mathbb{R}^{\geq 0}$ is time, and we employ Newton's notation such that $\dot{\mathbf{x}} := \partial \mathbf{x}/\partial t$. The position of the $i^{\textrm{th}}$ particle is denoted by $\mathbf{x}^{(i)}$. The work done by the magnetic field is zero, giving:

\begin{equation}
\label{eq:period_diff}
\dfrac{1}{2} m |\dot{\mathbf{x}}^{(i)}|^2 = -q \phi_{\mathbf{x} = \mathbf{x}^{(i)}} + U(t)
\end{equation}

where $\phi(\mathbf{x},t) \in \mathbb{R}$ is the electric potential, and $U(t)$ is the single particle energy. One can write this in the following form:

\begin{equation}
\label{eq:period_epsilon}
\epsilon(\mathbf{x}^{(i)},\dot{\mathbf{x}}^{(i)},t) = \dfrac{1}{2} m |\dot{\mathbf{x}}^{(i)}|^2 + q \phi_{\mathbf{x} = \mathbf{x}^{(i)}} + q \psi(\mathbf{x}^{(i)},\dot{\mathbf{x}}^{(i)},t)
\end{equation}

where $\epsilon$ and $\psi$ are energy-like functions, and $U(t) := \epsilon - q \psi$. For an orbit where the particle energy is conserved (herein referred to as a \emph{conservative} orbit) $U(t)$ must be constant. Via gauge freedom, we therefore define a conservative orbit as one where $q \psi = 0$. As such, $\epsilon$ is defined to be equal to the single particle energy when $U(t)$ is constant.

One can find the time taken to complete a closed orbit at a time $t$ by solving the differential equation given by \eqref{eq:period_diff}:

\[
\tau(t) = \oint_C \dfrac{\rmd |\mathbf{x}^{(i)}|(t)}{\sqrt{U(t) - q \phi(\mathbf{x}^{(i)}(t),t)}}
\]

where $C$ is a given orbit, and $\tau(t)$ is the orbit period. By performing Taylor expansion of the integrand about $U(t) = U(t_0)$:

\begin{equation}
\tau(t) = \oint_C \left\{\dfrac{\rmd |\mathbf{x}^{(i)}|(t)}{\sqrt{U(t_0) - q\phi(\mathbf{x}^{(i)}(t),t)}} \left[1 - \dfrac{1}{2}\left(\dfrac{\delta U(t)}{U(t_0) - q\phi(\mathbf{x}^{(i)}(t),t)} \right)  + \dfrac{3}{8}\left(\dfrac{\delta U(t)}{U(t_0) - q\phi(\mathbf{x}^{(i)}(t),t)} \right)^2 \dots \right] \right\}
\end{equation}

where $\delta U = U(t) - U(t_0)$. For $\tau$ to be finite on a given orbit:

\begin{align*}
U(t_0) \neq q \phi(\mathbf{x}^{(i)}(t),t) \, \forall \, \mathbf{x}^{(i)}(t) \in C &;&
\displaystyle \lim_{l \to \infty} \oint_C \dfrac{d |\mathbf{x}^{(i)}|(t)}{\sqrt{U(t_0) - q \phi(\mathbf{x}^{(i)}(t),t)}} \left(\dfrac{\delta U(t)}{\sqrt{U(t_0) - q \phi(\mathbf{x}^{(i)}(t),t)}}\right)^l = 0
\end{align*}

This yields an important limit for this perturbative orbit theory:

\begin{equation}
\delta U(t) < \sqrt{U(t_0) - q \phi(\mathbf{x}^{(i)}(t),t)}
\end{equation}

That is to say, for very large perturbations to the single particle energy, this perturbative theory breaks down; one cannot use this perturbative analysis any more as the Taylor expansion of the integrand is divergent. However, for all other cases, one can use this perturbative theory.

For further analysis, we drop the superscript $(i)$. One can therefore represent $\tau$ as $\tau_{(0)}(t) + \delta \tau(\delta U(t), t)$, where the zeroth-order contribution $\tau_{(0)}$ can be considered as the period of a near-conservative orbit; in a sense this is equivalent to the unperturbed orbit considered in \secref{sec:action}. This tells one the `instantaneous' period of the orbit that the particle is on; if $U(t)$ was to then remain fixed for all time $t > t_0$, the particle would continue on a conservative orbit with period $\tau_{(0)}$.

Suppose one examines a monochromatic potential wave given by $\phi(\mathbf{x},t) = |\phi|(t) \cos \left(kx - \int\limits_0^t \omega(t') \rmd t' + \theta\right)$, where $k$ is the wavenumber of the wave, $\omega(t)$ is the frequency of the wave, and $\theta$ is some arbitrary phase. Then, if one assumes that the bounce frequency of particles is much greater than $\omega$ (as is the case for a suitably large amplitude wave):

\begin{equation}
\label{eq:period_tau}
\tau_{(0)}(t) = \displaystyle\oint_C \dfrac{\rmd x}{\sqrt{U(t_0) - q |\phi|(t) \cos \left(kx - \int\limits_0^t \omega(t') \rmd t' + \theta\right)}} \approx \Re\left[\dfrac{4}{\sqrt{U(t_0) - q |\phi|(t)}} K\left(-\dfrac{2q |\phi|(t)}{U(t_0) - q |\phi|(t)} \right)  \right]
\end{equation}

\begin{figure}[t!]
	\centering
	\begin{subfigure}[t]{0.45\textwidth}
		\includegraphics[width=\textwidth]{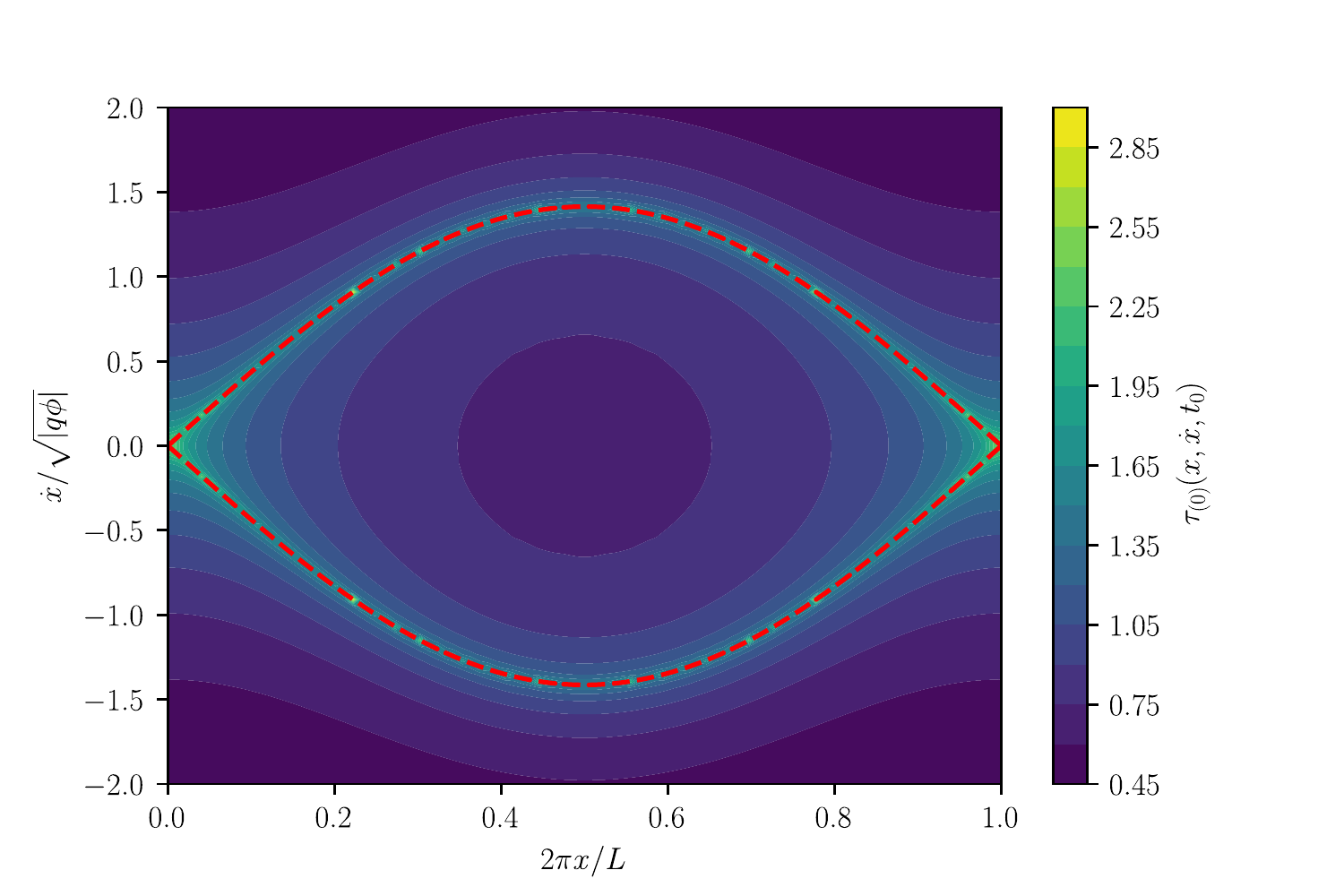}
		\caption{\textbf{Case $q\psi = 0$}. The conservative orbit contours are exactly that which is expected from BGK theory; each orbit corresponds to a different value of $\epsilon$. This is reflected by the fact that the zeroth-order contribution to the period, $\tau_{(0)}$, tends to infinity along the separatrix.}
		\label{fig:period_tau_psizero}
	\end{subfigure} ~~~ 
	\begin{subfigure}[t]{0.45\textwidth}
		\includegraphics[width=\textwidth]{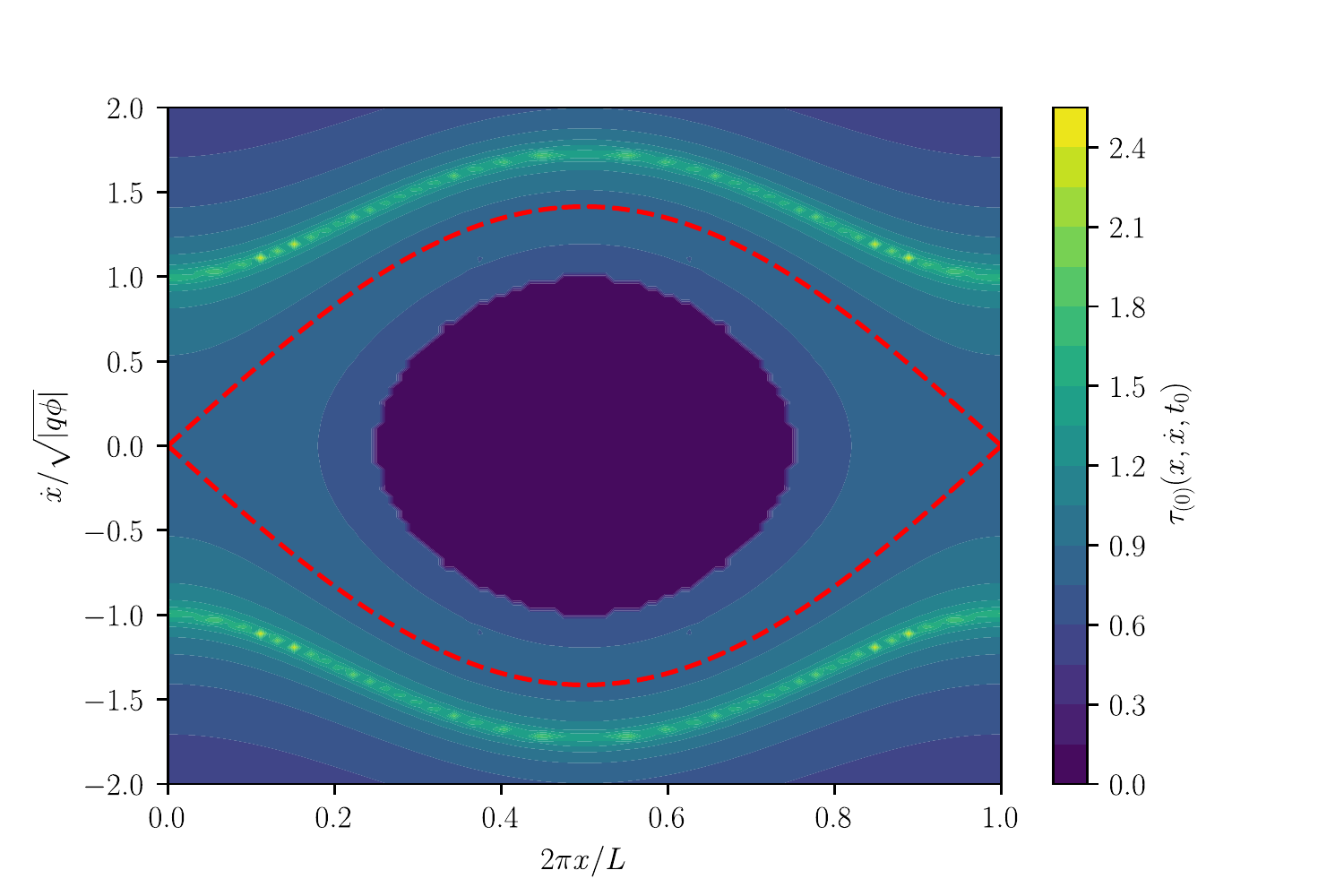}
		\caption{\textbf{Case $q\psi < 0$}. The contour where $\tau_{(0)} \to \infty$ occurs \emph{outside} of the separatrix. Accordingly, particles which are outside of the separatrix but within the infinite $\tau_{(0)}$ contour may exhibit similar behaviour to trapped particles (which we refer to as `quasi-passing' particles).}
		\label{fig:period_tau_psineg}
	\end{subfigure}
	\caption{\textbf{Period of particle-orbits in a BGK island.} Two plots illustrating conservative orbits in a single wavelength 1D potential $\phi(x,t) = |\phi| \cos (kx - \omega t)$ with phase velocity $u = \omega/k$. Filled contours correspond to different values of $\tau_{(0)}$, with the red line denoting the island separatrix. The length of the 1D box is given by $L = 2 \pi / k$.}
	\label{fig:period_tau}
\end{figure}

where $K(k)$ is the complete elliptic integral of the first kind. \cite{abramowitz1965handbook} Crucially, for all of the orbits, the period is different. Therefore, while the phase space structure is coherent (the entire structure moves with a single phase velocity), \emph{the single-particle orbits are not coherent}.

One can examine a single orbit such that $U(t=t_0) \equiv [\epsilon - q\psi]_{t = t_0}$ is fixed, and then investigate what may happen when one increases the wave amplitude. In \figref{fig:period_passing}, we plot the period of near-conservative orbits as a function of $q |\phi|$. The period increases monotonically with the wave amplitude provided that $\dot{x}^2 \gg q\phi$; these passing particles are almost coherent and approximately free stream through the phase space. One finds that the period is roughly constant for passing particles with $\dot{x}^2 > (q\phi-|q\phi|)$ but existing in a velocity width equal to the width of the phase space island, shown in the plot by the region where $\frac{|q\phi|}{|\epsilon - q\psi|} \to -\infty$; these particles are not trapped, but interact somewhat with the wave (quasi-passing). However, once the wave amplitude becomes large enough the period decreases monotonically with the wave amplitude; these particles fall into the potential as the wave grows.

The quasi-passing particles are of particular interest; while these particles are not trapped by the potential, they can still exchange energy with the wave. In \figref{fig:period_tau}, we show the orbit period $\tau_{(0)}$ as a function of particle position, particle velocity, and time. To achieve this, we calculate $\epsilon(x,\dot{x},t)$ using equation \eqref{eq:period_epsilon}. Then, using those values of $\epsilon$, we calculate $U(t_0)$. As such:

\[
\tau_{(0)}(t) = \tau_{(0)}[x(t),\dot{x}(t),t] = \tau_{(0)}[U(t_0),\phi(x,t)] = \tau_{(0)}[\epsilon(x,\dot{x},t),\psi,t]
\]

For the case where $q\psi = 0$ (see \figref{fig:period_tau_psizero}), contours where the orbit period tends to infinity align with the separatrix. However, when $q\psi < 0$ (see \figref{fig:period_tau_psineg}), contours where the orbit period tends to infinite lie outside of the separatrix. As such, particles which are outside of the separatrix may exhibit similar behaviour to trapped particles within the separatrix. These particles are quasi-passing, and while they are not trapped, they can be expected to exchange energy with the wave.

\section{Generating function formulation}
\label{sec:generating}
Here, we examine a collisionless plasma, using the full 6+1D Vlasov equation. We seek to find a solution using $\epsilon$ defined as the following:

\begin{equation}
\label{eq:generating_epsilon}
\epsilon = \sum\limits_l \left[q \phi^{[l]} + \dfrac{1}{2}m|\mathbf{v}-\mathbf{u}^{[l]}|^2 \right] + q \psi
\end{equation}

where $\phi$ is now the mean-field electric potential, and $\phi^{[l]}$ denotes a wavepacket of $\phi$ such that each constituent wave has phase velocity $u^{[l]}$. If one defines $f(x,v,t) = f[\epsilon]$, then, everywhere except on contours of constant $\epsilon$, $\epsilon$ satisfies the Vlasov equation:

\[
\hat{V} \epsilon = 0
\]

Where $\hat{V}$ is the `Vlasov operator':

\begin{equation}
\hat{V} := \pdif{}{t} + \mathbf{v} \cdot \nabla + \dfrac{q}{m} [\mathbf{E} + \mathbf{v} \times \mathbf{B}]\cdot \nabla_{\mathbf{v}}
\end{equation}

with $\mathbf{E}$ and $\mathbf{B}$ as the mean-field electric field and magnetic flux density respectively. We employ the following decomposition of the $\mathbf{E}$ and $\mathbf{B}$ field:\cite{woods2019analytical}

\begin{equation}
\label{eq:generating_E}
\left( \begin{array}{c} \mathbf{E} \\ \mathbf{B} \end{array} \right) (x,t) := \sum\limits_{l} \left( \begin{array}{c} \mathbf{E}^{[l]} \\ \mathbf{B}^{[l]} \end{array} \right) (x,t)
\end{equation}

where each wavepacket is given by:

\begin{equation}
\label{eq:generating_El}
\left( \begin{array}{c} \mathbf{E}^{[l]} \\ \mathbf{B}^{[l]} \end{array} \right) (x,t) := \dfrac{1}{2} \displaystyle\sum\limits_j \left\{\left( \begin{array}{c} \mathbf{E}_j^{[l]} \\ \mathbf{B}_j^{[l]} \end{array} \right)  \exp \left[\rmi \mathbf{k}_j \cdot \mathbf{x} + \int\limits_{0}^{t} p_{j}^{[l]} (\tau) \rmd \tau\right] + \textrm{c.c.} \right\}
\end{equation}

and where each $p_j^{[l]} \in \mathbb{C}$ is given by:

\[
p_j^{[l]} := \gamma_j^{[l]} - \rmi \omega_j^{[l]}
\]

If one substitutes $\epsilon$ into the Vlasov equation:

\begin{equation}
\label{eq:generating_vlasov}
\begin{array}{r l}
q \hat{V} \psi = \overbrace{\sum\limits_l \left[m(\mathbf{v}-\mathbf{u}^{[l]}) \cdot \left\{\pdif{\mathbf{u}^{[l]}}{t} + \mathbf{v} \cdot \nabla \mathbf{u}^{[l]} \right\}\right]}^{\textrm{frequency sweep}} - \underbrace{q\sum\limits_l \left[ \pdif{\phi^{[l]}}{t} + \mathbf{u}^{[l]} \cdot \nabla \phi^{[l]} \right]}_{\textrm{drive}} + \underbrace{\sum\limits_l \left[ \mathbf{u}^{[l]} \cdot (\mathbf{v} \times q \mathbf{B}) \right]}_{\textrm{gyration}}
\end{array}
\end{equation}

In this sense, the generating function $\psi$ provides a correction to $\epsilon$. As contours of constant $f$ (and therefore constant $\epsilon$) denote particle orbits, systems with finite $\{\gamma_j^{[l]}\}$, finite $\{\partial_t \omega_j^{[l]}\}$, finite curvature of $\{\mathbf{k}_j\}$, or collisions require particles to take nonconservative orbits in phase space. 

The `frequency sweep' term and `drive' term correspond to advection of $\epsilon_0$. Using the form of the electric potential given by \label{eq:generating_El}, the electric potential is a solution of the coupled inhomogeneous advection equation:

\begin{equation}
\label{eq:generating_inhom}
\begin{array}{r l}
\displaystyle\sum\limits_l \left[\pdif{\phi^{[l]}}{t} + \mathbf{u}^{[l]} \cdot \nabla \phi^{[l]} \right] &= \displaystyle\sum\limits_{j,l} \Bigg\{ \left\{ \gamma_j^{[l]} - \mathbf{u}^{[l]} \cdot [\nabla(\mathbf{k}_j \cdot \mathbf{x}) - \mathbf{k}_j)] \right\} |\phi_j^{[l]}| \\
& \hspace{30pt} \displaystyle \cdot \exp \left[\int_0^t \gamma_j^{[l]} \rmd \tau\right] \cos \left[\mathbf{k}_j  \cdot \mathbf{x} - \int_0^{t} \omega_j^{[l]} \rmd \tau + \theta_j^{[l]}\right] \Bigg\}
\end{array}
\end{equation}

where $\theta_j^{[l]}$ is a phase angle. The term proportional to $ \mathbf{u}^{[l]} \cdot [\nabla(\mathbf{k}_j \cdot \mathbf{x}) - \mathbf{k}_j]$ is a correction to the drive that is given by the curvature of the wavevector. If the wave is propagating  in a rectilinear fashion, this term is identically zero.

In \secref{subsec:generating_transformation}, we give Galilean transformation matrices which allow one to represent the problem in non-inertial, comoving frames. Then, in \secref{subsec:generating_frequency} and \secref{subsec:generating_growth}, we give functions which generate the `frequency sweep' and `growth' terms in \eqref{eq:generating_vlasov}. Later, in \secref{subsec:generating_nowave}, we give approximate solutions for the electrostatic Vlasov equation, under the limit that no wave-wave coupling occurs.

\subsection{Transformation matrices}
\label{subsec:generating_transformation}
Here, we utilise the following Galilean transformation matrix \cite{woods2019analytical}:

\begin{equation}
\label{eq:generating_transformation_chi}
\chi_j^{[l]} := \mathbf{k}_j \cdot \mathbf{x} - \int\limits_0^{t} \omega_j^{[l]}(\tau) \rmd \tau
\end{equation}

such that each element of the matrix corresponds to different wavepackets (different $\mathbf{u}^{[l]}$) and different wavevectors. This is motivated by BGK theory, where BGK modes are derived in the comoving frame for the wave. Logically, one can also define a corresponding velocity transformation matrix $\nu$:

\begin{equation}
\label{eq:generating_transformation_nu}
\nu_{j}^{[l]} := \mathbf{k}_j \cdot \mathbf{v} - \omega_{j}^{[l]}
\end{equation}

Components of the matrix are zero at the point of Landau resonance, where $\omega_j^{[l]} = \mathbf{v} \cdot \mathbf{k}_j$. The derivatives transform covariantly under $(\mathbf{x},\mathbf{v},t) \to (\chi, \nu, t)$, and therefore one must careful perform the co-ordinate transformation as the direction of wave propagation can be curvilinear. Therefore:

\begin{subequations}
\label{eq:generating_transformation_derivatives}
\begin{align}
\nabla &= \displaystyle\sum\limits_{j,l} \left\{ \mathbf{k}_j - [\nabla(\mathbf{k}_j \cdot \mathbf{x}) - \mathbf{k}_j] \right\}\left.\pdif{}{\chi_{j}^{[l]}}\right|_{\nu,t} \\
\nabla_{\mathbf{v}} &= \displaystyle\sum\limits_{j,l} \mathbf{k}_j \left.\pdif{}{\nu_{j}^{[l]}}\right|_{\nu,t} \\
\left.\pdif{}{t}\right|_{\mathbf{x},\mathbf{v}} &= -\displaystyle\sum\limits_{j,l} \left\{\omega_j^{[l]} \left.\pdif{}{\chi_{j}^{[l]}}\right|_{\nu,t} + \ddif{\omega_j^{[l]}}{t} \left.\pdif{}{\nu_{j}^{[l]}}\right|_{\chi,t} \right\} +  \left.\pdif{}{t}\right|_{\chi,\nu}
\end{align}
\end{subequations}

Under this transformation, one can split $\psi$ into three parts: $\psi_{\partial}$ encapsulating all of the curvilinear terms (given by $\sim [\nabla(\mathbf{k}_j \cdot \mathbf{x}) - \mathbf{k}_j]$), $\psi_{\mathbf{B}}$ encapsulating all of the magnetic field terms, and $\psi_0$ containing only the rectilinear terms and the electric field parts. Then:

\begin{equation}
\label{eq:generating_transformation_psi0}
\begin{array}{l}
q\left[\left.\pdif{}{t}\right|_{\chi,\nu} + \displaystyle\sum\limits_{j,l} \bigg\{\nu_{j}^{[l]} \pdif{}{\chi_{j}^{[l]}} - \dfrac{q}{m} \left[\dfrac{m}{q} \ddif{\omega_j^{[l]}}{t} + |\mathbf{k}_j|^2 \sum\limits_{l'} \pdif{\phi}{\chi_{j}^{[l']}} \right] \pdif{}{\nu_{j}^{[l]}}\bigg\} \right] \psi_0 \\
\hspace{40pt} = \displaystyle\sum\limits_{j,l} \left[m \nu_j^{[l]} \ddif{\omega_j^{[l]}}{t} \right] - q\sum\limits_l \left.\pdif{\phi^{[l]}}{t}\right|_{\chi,\nu}
\end{array}
\end{equation}

\subsection{Frequency sweep generating function}
\label{subsec:generating_frequency}
Here, we aim to solve the equation:

\begin{equation}
\label{eq:generating_frequency_eq}
q\displaystyle\sum\limits_{j,l} \bigg\{\nu_{j}^{[l]} \pdif{}{\chi_{j}^{[l]}} - \dfrac{q}{m} |\mathbf{k}_j|^2 \sum\limits_{l'} \pdif{\phi}{\chi_{j}^{[l']}}\pdif{}{\nu_{j}^{[l]}}\bigg\} \psi_{\textrm{sw.}} = \displaystyle\sum\limits_{j,l} \left[m \nu_j^{[l]} \ddif{\omega_j^{[l]}}{t} \right] 
\end{equation}

This allows one to generate the `frequency sweep' term in equation \eqref{eq:generating_vlasov}. As the right hand side is independent of $\chi_j^{[l]}$, it is fairly straightforward to show that this has a solution given by:

\begin{equation}
\label{eq:generating_frequency_psisw}
\psi_{\textrm{sw.}} = \sum\limits_{j,l} \left[\dfrac{m}{q} \chi_j^{[l]} \ddif{\omega_j^{[l]}}{t} \right]
\end{equation}

This solution is not periodic; we will address this later in \secref{subsec:generating_nowave}.

\subsection{Growth rate generating function}
\label{subsec:generating_growth}
Here, we aim to solve the equation:

\begin{equation}
\label{eq:generating_growth_eq}
q\displaystyle\sum\limits_{j,l} \bigg\{\nu_{j}^{[l]} \pdif{}{\chi_{j}^{[l]}} - \dfrac{q}{m} |\mathbf{k}_j|^2 \sum\limits_{l'} \pdif{\phi}{\chi_{j}^{[l']}}\pdif{}{\nu_{j}^{[l]}}\bigg\} \psi_{\textrm{sw.}} = - q\sum\limits_l \left.\pdif{\phi^{[l]}}{t}\right|_{\chi,\nu}
\end{equation}

This allows one to generate the `drive' term in equation \eqref{eq:generating_vlasov}.

\subsubsection{Single wavepacket of constant frequency}
\label{subsubsec:generating_growth_single}
One can first attempt to solve the following equation:

\begin{equation}
\label{eq:generating_growth_single_eq}
\left[\mathbf{v} \cdot \nabla - \dfrac{q}{m} (\nabla \phi) \cdot \nabla_{\mathbf{v}} \right] \psi = - \gamma \phi
\end{equation}

where $\phi = \phi(\mathbf{x})$. To do so, one requires an extension of the Leibniz integral rule to vector differential operators (see \appref{app:leibniz}). Using the vectorial form, one finds that the following integral is useful:

\[
I_n := \dfrac{1}{v} \int\limits_{\mathbf{x}_{(0)}}^{\mathbf{x}} \phi(\mathbf{x}')^n \mathbf{v} \cdot \rmd \mathbf{x}'
\]

where $\mathbf{x}_{(0)}$ is selected via gauge freedom. $I_n$ yields the following gradient:

\[
\nabla I_n = \dfrac{\mathbf{v}}{|\mathbf{v}|} \phi(\mathbf{x})^n
\]

In \appref{app:infinite}, we therefore show that the following solution for $\psi$ is permitted:

\begin{equation}
\label{eq:generating_growth_single_psi}
\psi = - \gamma \sqrt{\dfrac{m}{2}} \int\limits_{\mathbf{x}_{(0)}}^{\mathbf{x}}\dfrac{q\phi(\mathbf{x}')}{\sqrt{\epsilon_{(0)}(\mathbf{x},\mathbf{v}) - q \phi(\mathbf{x}')}} \dfrac{\mathbf{v}}{|\mathbf{v}|} \cdot \rmd \mathbf{x}'
\end{equation}

where $\epsilon_0 = q \phi + \frac{1}{2}mv^2$. It is important to note that $\psi$ is real and without singularities in the region:

\[
q \phi(x)  + \dfrac{1}{2}mv^2 > q |\phi|
\]

where $|\phi|$ is the amplitude of the wave. For values where $q \phi(\mathbf{x}) + \frac{1}{2}m|\mathbf{v}|^2 = q |\phi|$, $\psi$ is singular. Furthermore, for $q \phi(\mathbf{x}) + \frac{1}{2}m|\mathbf{v}|^2 < q |\phi|$, $\psi$ is imaginary. Therefore, to enable real values of $\psi$, one must find another solution for the regions where $q \phi(\mathbf{x}) + \frac{1}{2}m|\mathbf{v}|^2 \leq q |\phi|$.

In addition, the solution for $\psi$ given above is not periodic. We will address this later in \secref{subsec:generating_nowave}.

\subsubsection{Multiple non-interacting wavepackets}
\label{subsubsec:generating_growth_multiple}
The full equation given by [CITE] is very difficult to solve. Instead, one can examine a simpler scenario by enforcing no wave-wave coupling here and neglecting the time derivative:

\[
\left[\sum\limits_{j,l} \bigg\{\nu_{j}^{[l]} \pdif{}{\chi_{j}^{[l]}} - \dfrac{q}{m} |\mathbf{k}_j|^2 \pdif{\tilde{\phi}_j^{[l]}}{\chi_{j}^{[l]}} \pdif{}{\nu_{j}^{[l]}}\bigg\} \right] \psi_{\gamma} \approx - \sum\limits_{j,l} \gamma_j^{[l]} \tilde{\phi}_j^{[l]}
\]

where $\tilde{\phi}_j^{[l]} = \frac{1}{2} \exp \left[\rmi \mathbf{k}_j \cdot \mathbf{x} - \int\limits_0^{t} p_j^{[l]} \rmd \tau \right] + \textrm{c.c.}$, and all terms in the inner sum with $l \neq l'$ have been discarded. It is trivial to show that by extending \eqref{eq:generating_growth_single_eq} to an arbitrarily sized dimensional space and solving for each $\psi_j$ independently, a solution is given by:

\begin{equation}
\label{eq:generating_growth_multiple_psig}
\boxed{
\psi_{\gamma} = \psi_{\textrm{ww}} - \sum\limits_{j,l} \gamma_j^{[l]} \dfrac{\nu_{j}^{[l]}}{\sqrt{\sum\limits_{j,l} (\nu_{j}^{[l]})^2}} \sqrt{\dfrac{m}{2}} \int\limits_{\chi_{(0),j}^{[l]}}^{\chi_j^{[l]}}\dfrac{q\phi(\chi_j^{[l]}\,',\dots)}{\sqrt{\epsilon_{(0)}(\chi,\nu) - q \phi(\chi_j^{[l]}\,',\dots)}} \rmd \chi_j^{[l]}\,'
}
\end{equation}

where $\{\chi_{(0),j}^{[l]}\}$ are selected via gauge freedom, such that one simply sums over all wavevectors and frequencies independently. $\psi_{\textrm{ww}}$ encapsulates extra terms that would appear which are highly nonlinear, and feature wave-wave coupling. These interactions cannot be neglected for wave-wave interactions.

\subsection{No wave-wave coupling}
\label{subsec:generating_nowave}
For collisionless systems, one can represent the generating function $\psi$ as the sum of three terms:

\begin{equation}
\label{eq:generating_nowave_psi}
\psi = \psi_{\gamma} + \psi_{\textrm{sw.}} + \delta \psi
\end{equation}

where $\delta \psi$ is a highly nonlinear term that is given by all other contributions:

\begin{equation}
\label{eq:generating_nowave_dpsi}
\delta \psi := \psi_{\partial} + \psi_{\mathbf{B}} + \psi_{\textrm{ww}} + [\psi_0(\chi,\nu,t) - \psi_0(\chi,\nu,t=0)]
\end{equation}

By operating on $\psi$, in a system with rectilinearly propagating waves one finds by inspection, under constant growth rates:

\[
\hat{V}\delta \psi = \sum\limits_{j,l} \left[\dfrac{m}{q} \chi_j^{[l]} \ddif{^2 \omega_j^{[l]}}{t^2} \right] - \sum\limits_{j,l} \gamma_j^{[l]} \sqrt{\dfrac{m}{2}} \int\limits_{\chi_{(0),j}^{[l]}}^{\chi_j^{[l]}}\dfrac{q\phi(\chi_j^{[l]}\,',\dots)}{2(\epsilon_{(0)}(\chi,\nu) - q \phi(\chi_j^{[l]}\,',\dots))^{3/2}}  \hat{V}\left( \epsilon_{(0)} \dfrac{\nu_{j}^{[l]}}{\sqrt{\sum\limits_{j,l} (\nu_{j}^{[l]})^2}} \right) \rmd \chi_j^{[l]}\,'
\]

As such, if the frequency sweep rate is static, there is a small $\mathbf{B}$-field and $\epsilon_0$ is approximately static, then $\delta \psi$ can be considered to be mostly a small perturbation to the system; therefore:

\begin{equation}
\exists \, \mathbf{E}, \mathbf{B} \, : \, \hat{V} \delta \psi \ll \hat{V} (\psi - \delta \psi)
\end{equation}

However, there is always a finite amount of wave-wave coupling in the system, as growth rate $\to$ modified orbit shape $\to$ shear $\to$ wave-wave coupling. Intrinsically, by discarding this term one assumes low shear of phase space islands, and negligible coupling between islands. One can assume that islands which are far apart do not interact with each other when considering gap toroidal Alfv\'{e}n eigenmodes in tokamaks,\cite{heidbrink2008basic} however this approximation does not allow one to accurately examine hole and clump formation. For one to do so, one must retain all of the wave-wave coupling encapsulated by $\delta \psi$.

With regards to $(\psi - \delta \psi)$, in the limit of $\delta \psi \to 0$, one desires periodicity:

\[
\lim_{\delta \psi \to 0} (\psi - \delta \psi) \, : \, \psi(\mathbf{x},\mathbf{v},t) \equiv \psi(\mathbf{x}+ L_i \hat{\mathbf{x}}_i,\mathbf{v},t) \, \forall \, i
\]

where $L_i$ is the box length in the $x_i$ direction. To enforce this, one can demand that all terms proportional to $\chi_j^{[l]}$ must vanish. This requires:

\begin{equation}
\label{eq:generating_nowave_cond}
\boxed{
\sum\limits_{j,l} \dfrac{m}{q} \ddif{\omega_j^{[l]}}{t} \approx \sum\limits_{j,l} \gamma_j^{[l]} \dfrac{\nu_{j}^{[l]}}{\sqrt{\sum\limits_{j,l} (\nu_{j}^{[l]})^2}} \sqrt{\dfrac{m}{2}} \int\limits_{\chi_{(0),j}^{[l]}}^{\chi_{(0),j}^{[l]} + 2\pi}\dfrac{q\phi(\chi_j^{[l]}\,',\dots)}{\sqrt{\epsilon_0(\chi,\nu) - q \phi(\chi_j^{[l]}\,',\dots)}} \rmd \chi_j^{[l]}\,' \approx 0
}
\end{equation}

This can be safely enforced for the case where the sweeping rate is small, and where the wave amplitude is small. In such a scenario, one can approximate:

\[
\psi \approx \psi_{\gamma} + \psi_{\textrm{sw.}}
\]

\begin{figure}[h!]
	\centering
	\includegraphics[width=0.6\textwidth]{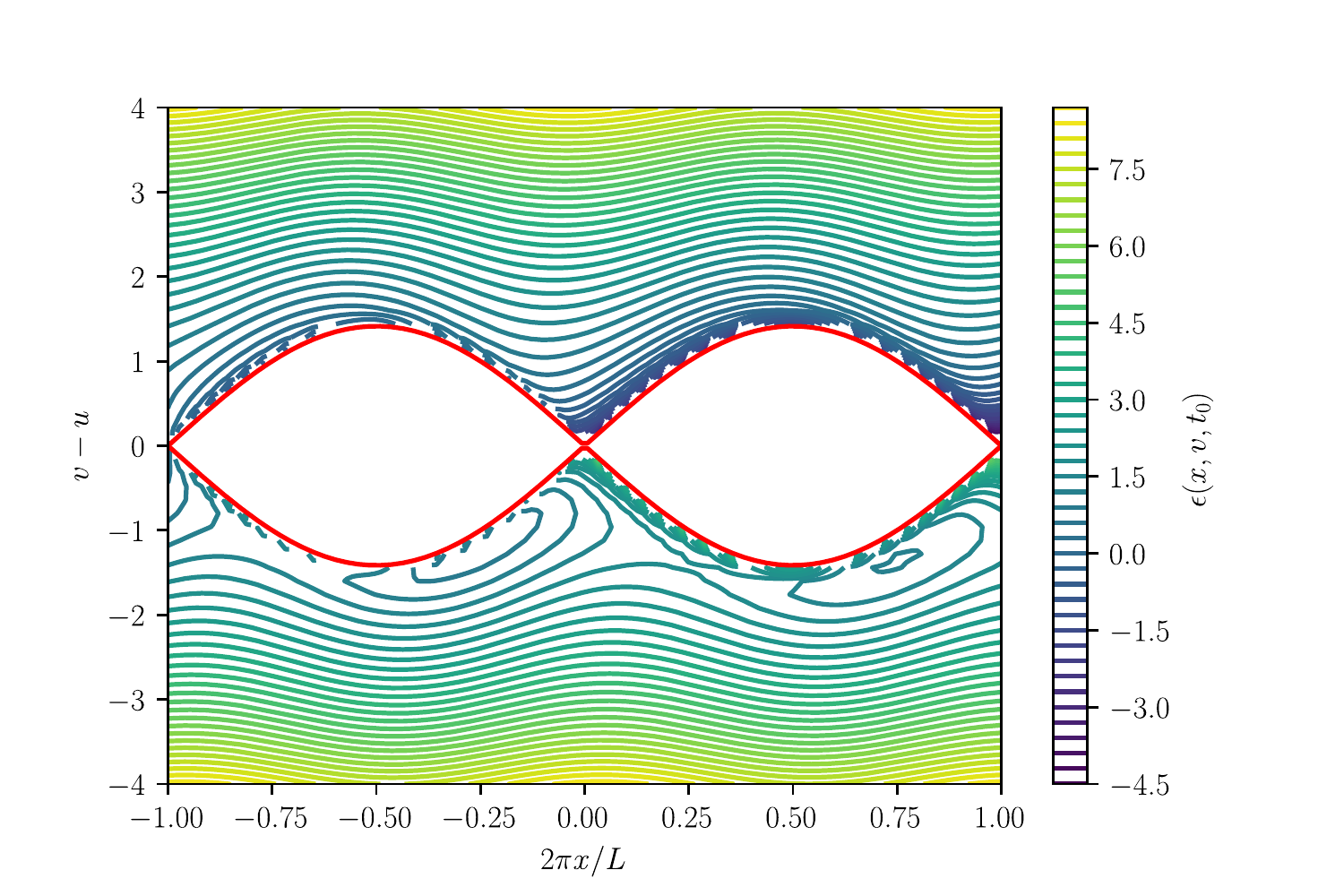}
	\caption{\textbf{Existence of quasi-passing particles}. Contours of constant $\epsilon(x,v,t)$ for a single 1D electrostatic wave with time-invariant frequency and growing amplitude undergoing a form of resonant interaction in a 1D kinetic system (derived in \secref{subsec:generating_nowave}). Particles outside of the separatrix appear to exist in closed orbits; it is possible that these are quasi-passing particles as proposed in \secref{subsec:generating_nowave}.}
	\label{fig:generating_nowave_psi}
\end{figure}

In \figref{fig:generating_nowave_psi}, we show an example case with a single electric potential wave with time-invariant frequency and growing amplitude. Close to the separatrix, some particles appear to be existing in closed orbits. It is worth noting that the condition as posed in \eqref{eq:generating_nowave_cond} is violated here. As a result, there is a small amount of aperiodicity in the plots. In reality, this aperiodicity must be cancelled out by $\delta \psi$, as $\psi$ must be periodic.

\section{Conclusions}
In this paper, we examined analytical extensions to BGK theory that allow for finite growth rate and frequency sweeping without considering a `continuum of BGK modes'. We examined how the period of orbits is modified under perturbations to the electric potential, showing that a population of particles act as `quasi-passing', where they are not bound by the electric potential, but interact with the wave. Using this solution, in \secref{subsec:generating_nowave} we gave approximate solutions for $\epsilon$. In \figref{fig:generating_nowave_psi}, we showed that a subpopulation of particles outside of the separatrix appear to embark on orbits which straddle the separatrix. Heuristically, we posit that these may be `quasi-passing' particles as proposed in \secref{sec:period}.

By considering $\epsilon$ along particle orbits as the sum of a quantity representing the single particle energy and a generating function $\psi$, we showed that an analytical closed form solution exists which allows for finite growth rate. 

A next logical step would be to utilise this theory to recover analytic predictions such as those given by Berk-Breizman models for the form of phase-space structures during events such as frequency bifurcation, mode growth, and frequency sweeping. However, this would require one to re-examine the solution given in \secref{subsec:generating_nowave}, as one would need to allow the growth rates to vary in time. In addition in Berk-Breizman models, the kinetic equation that describes the evolution of the distribution function in these models features other sources and sinks which have been omitted here.\cite{berk1997spontaneous} Furthermore, the approximate solutions given here violate periodic boundary conditions, and therefore a further step would be to seek solutions which do not include this aperiodicity. These solutions would include extra terms omitted here.

\section{Acknowledgements}
The author was funded by the EPSRC Centre for Doctoral Training in Science and Technology of Fusion Energy grant EP/L01663X. This work has been carried out within the framework of the EUROfusion Consortium and has received funding from the Euratom research and training programme 2014-2018 and 2019-2020 under grant agreement No 633053. The views and opinions expressed herein do not necessarily reflect those of the European Commission. 

The author would like to thank V. N. Duarte and R. G. L. Vann for useful discussions.

\appendix
\section{Infinite sum solution for growth rate generating function}
\label{app:infinite}
Here we shall solve equation \eqref{eq:generating_growth_single_eq}.

\begin{lemma}
The following equation:

\[
\left[\mathbf{v} \cdot \nabla - \dfrac{q}{m} (\nabla \phi) \cdot \nabla_{\mathbf{v}} \right] \psi = - \gamma \phi
\]

permits a solution:

\[
\psi = - \dfrac{\gamma}{|\mathbf{v}|} \sum\limits_{n=0}^{\infty} \left( \dfrac{q}{\frac{1}{2}mv^2} \right)^n \dfrac{(-1)^n (2n!)}{2^{2n} n!} C_n
\]

where $C_n$ are integral polynomials:

\[
C_n \equiv \sum\limits_{l=0}^{n} \dfrac{(-1)^l}{l(n-l)!} \phi^{n-l} I_{l+1}
\]

with $I_n(x)$ defined as:

\begin{equation}
I_n(x) := \int_{\mathbf{x}_{(0)}}^{\mathbf{x}} \phi^n(x') \dfrac{\mathbf{v}}{|\mathbf{v}|} \cdot \rmd \mathbf{x}'
\end{equation}
\end{lemma}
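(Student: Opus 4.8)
The plan is to verify by direct substitution that the stated series solves the equation, reducing the check to two facts about the operator $\mathcal{L} := \mathbf v\cdot\nabla - \tfrac{q}{m}(\nabla\phi)\cdot\nabla_{\mathbf v}$ on the left-hand side. Since $\phi=\phi(\mathbf x)$ only, $\mathcal{L}$ is the generator of the unperturbed characteristic flow — free motion of a particle of mass $m$, charge $q$ in the static potential $q\phi$ — so the unperturbed energy $\epsilon_0 := \tfrac12 m v^2 + q\phi$ is a constant of that flow: $\mathcal{L}\epsilon_0 = q\,\mathbf v\cdot\nabla\phi - \tfrac{q}{m}(\nabla\phi)\cdot(m\mathbf v) = 0$, whence $\mathcal{L}$ passes through any function of $\epsilon_0$, i.e.\ $\mathcal{L}[g(\epsilon_0)h] = g(\epsilon_0)\mathcal{L}h$. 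The second fact, which I would take from the vector Leibniz rule of \appref{app:leibniz}, is the gradient identity $\nabla I_n = (\mathbf v/|\mathbf v|)\,\phi^n(\mathbf x)$ already quoted in the main text, together with the companion velocity-space derivative of the direction vector, $\nabla_{\mathbf v}(\mathbf v/|\mathbf v|) = |\mathbf v|^{-1}(\mathbb{I} - \mathbf v\mathbf v/v^2)$.

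I would then argue that the series is nothing but the binomial expansion of the closed form \eqref{eq:generating_growth_single_psi}: writing $\epsilon_0(\mathbf x,\mathbf v) - q\phi(\mathbf x') = \tfrac12 m v^2 + q[\phi(\mathbf x)-\phi(\mathbf x')]$, expanding $(\cdot)^{-1/2}$ in powers of $q[\phi(\mathbf x)-\phi(\mathbf x')]/(\tfrac12 m v^2)$ with $\binom{-1/2}{n} = (-1)^n(2n)!/(2^{2n}(n!)^2)$, applying the binomial theorem to $(\phi(\mathbf x)-\phi(\mathbf x'))^n$, and using the definition of $I_{l+1}$ reproduces the quoted coefficients and the integral polynomials $C_n$ (up to normalisation). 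This rearrangement converges absolutely exactly in the region $q\phi(\mathbf x)+\tfrac12 m v^2 > q|\phi|$ flagged after \eqref{eq:generating_growth_single_psi}, where $|q[\phi(\mathbf x)-\phi(\mathbf x')]| < \tfrac12 m v^2$ along the path, so termwise differentiation is legitimate and it suffices to check $\mathcal{L}\psi = -\gamma\phi$ for the closed form. For the closed form the calculation is short: when $\mathbf v\cdot\nabla$ hits the upper endpoint $\mathbf x'=\mathbf x$ one has $\epsilon_0 - q\phi(\mathbf x) = \tfrac12 m v^2$ and $\mathbf v\cdot(\mathbf v/|\mathbf v|) = |\mathbf v|$, giving exactly $-\gamma\phi(\mathbf x)$; every remaining contribution must vanish, the $\mathbf x$- and $\mathbf v$-variations of $\epsilon_0$ inside the integrand assembling into $\mathcal{L}\epsilon_0 = 0$ and the variation of the direction vector in the integrand cancelling against the gauge/path terms in precisely the combination encoded by $\nabla I_n = (\mathbf v/|\mathbf v|)\phi^n$.

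The main obstacle, and the step I would spend most of the writeup on, is making that last cancellation rigorous — equivalently, pinning down the meaning of $\nabla I_n$ and $\nabla_{\mathbf v}I_n$. Because $\phi^n(\mathbf x')(\mathbf v/|\mathbf v|)$ is not a gradient field, the identity $\nabla I_n = (\mathbf v/|\mathbf v|)\phi^n$ holds only once one fixes the integration path of $I_n$ to be the ray through $\mathbf x$ in the direction $\mathbf v/|\mathbf v|$ and lets the reference point $\mathbf x_{(0)}$ (the leftover gauge freedom) absorb the transverse variation; this is exactly what the vector Leibniz rule of \appref{app:leibniz} is for. If one instead prefers to verify the series term by term and never invoke the closed form — which I would also present, since there the $v^2$-dependence is cleanly isolated in the explicit factors $(2q/mv^2)^n$ — the same obstacle reappears as a discrete telescoping identity: after applying $\mathcal{L}$, the terms in which $\mathbf v\cdot\nabla$ raises an explicit power $\phi^{n-l}$ in $C_n$ to $\phi^{n-l+1}$ must cancel the terms in which $\nabla_{\mathbf v}$ strikes $(2q/mv^2)^n$ or $\mathbf v/|\mathbf v|$ at orders $n$ and $n+1$, which is what forces the factorial weights $(2n)!/(n!)^2$ via the Pochhammer recursion together with Pascal's rule for the inner $l$-sum.
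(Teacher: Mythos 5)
Your proposal is correct in substance but runs the argument in the opposite direction to the paper. The paper's own proof of this lemma is a constructive iteration: it adds $\gamma I_1/|\mathbf{v}|$ to cancel the source $-\gamma\phi$, notes that the $\nabla_{\mathbf{v}}$ part of the operator regenerates a residual of order $q/(\tfrac{1}{2}m|\mathbf{v}|^2)$ times an integral polynomial, repeats, and argues the residual vanishes in the $n\to\infty$ limit; the resummation into the closed form \eqref{eq:generating_growth_single_psi} is deferred to a second, separate lemma. You instead verify the closed form in one stroke --- only the upper-endpoint Leibniz term survives because $\left[\mathbf{v}\cdot\nabla - \tfrac{q}{m}(\nabla\phi)\cdot\nabla_{\mathbf{v}}\right]\epsilon_{(0)}=0$ --- and then recover the series by binomial expansion, i.e.\ you run the paper's second lemma backwards. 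This buys a cleaner argument (the cancellation the paper spreads over infinitely many iterations is packaged once in the invariance of $\epsilon_{(0)}$, and the convergence region $q\phi+\tfrac{1}{2}m|\mathbf{v}|^2 > q|\phi|$ justifying termwise manipulation is made explicit), while your alternative term-by-term telescoping check is essentially the paper's iteration reorganised as a verification rather than a construction. Two caveats for your writeup. First, bookkeeping: the series as stated (with $C_0=I_1$; the $1/l$ in $C_n$ is evidently a typo for $1/l!$) solves the equation with right-hand side $-\gamma\phi$, whereas the closed form as printed, with $q\phi(\mathbf{x}')$ in the numerator, gives an endpoint term $-\gamma q\phi$; so your claim that the endpoint evaluation yields exactly $-\gamma\phi$ only holds if the numerator is $\phi(\mathbf{x}')$ --- this factor-of-$q$ mismatch is present in the paper itself, but you must fix one normalisation consistently. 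Second, the obstacle you flag is genuine and is not resolved by the paper either: in more than one dimension the contribution where $\nabla_{\mathbf{v}}$ strikes the direction $\mathbf{v}/|\mathbf{v}|$ inside $I_n$ (and the $\mathbf{v}$-dependence of the integration path) does not vanish automatically, and the paper's iteration silently drops it, effectively working in one dimension; your proposed remedy (fix the path along $\hat{\mathbf{v}}$ through $\mathbf{x}$ and let $\mathbf{x}_{(0)}$ absorb the transverse variation) should be stated as part of the definition of $I_n$ rather than derived from \appref{app:leibniz}, since the field $\phi^n\,\mathbf{v}/|\mathbf{v}|$ is not conservative and $\nabla I_n$ is otherwise path-dependent.
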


\begin{proof}
Seek a term which when operated on with $\mathbf{v} \cdot \nabla$ returns the right hand side, $- \gamma \phi$. Then, operate on this with the left hand side operator and see what extra term is generated. By using $\gamma I_1/|\mathbf{v}|$, one finds:

\[
\left[\mathbf{v} \cdot \nabla - \dfrac{q}{m} (\nabla \phi) \cdot \nabla_{\mathbf{v}} \right] \left( \psi + \gamma \dfrac{I_1}{|\mathbf{v}|} \right) = \gamma \dfrac{q}{m} \pdif{\phi}{x} \dfrac{I_1}{|\mathbf{v}|^2}
\]

where the term on the right hand side is generated by the Lorentz force term ($\nabla_{\mathbf{v}}$). Iteratively, one can find the series by repeating this technique:

\[
\left[\mathbf{v} \cdot \nabla - \dfrac{q}{m} (\nabla \phi) \cdot \nabla_{\mathbf{v}} \right]  \left( \psi + \gamma \left[ \dfrac{I_1}{|\mathbf{v}|} -  \dfrac{q}{m} \dfrac{\phi I_1 - I_2}{|\mathbf{v}|^3} \right] \right) = \gamma \dfrac{q^2}{m^2} \pdif{\phi}{x} \dfrac{3 (\phi I_1 - I_2)}{|\mathbf{v}|^4}
\]

At this point the following identity becomes useful:

\[
(a+1) \phi^{a} I_b \nabla \phi = \nabla [\phi^{a+1} I_b - I_{a+b+1}]
\]

One finds that on each iteration one produces a term which is $q/mv^2$ times an integral polynomial. Therefore, after an infinite number of iterations, one finds:

\begin{equation}
\label{app:nonconservative_growth_infinite1D_iteration}
\left[\mathbf{v} \cdot \nabla - \dfrac{q}{m} (\nabla \phi) \cdot \nabla_{\mathbf{v}} \right] \psi' \equiv \lim_{n \to \infty} \gamma \left( \dfrac{q}{\frac{1}{2}m|\mathbf{v}|^2} \right)^n \dfrac{(-1)^n (2n!)}{2^{2n} n!} C_n
\end{equation}

where $\psi'$ is given by:

\[
\psi' = \psi + \dfrac{\gamma}{|\mathbf{v}|} \sum\limits_{n=0}^{\infty} \left( \dfrac{q}{\frac{1}{2}m|\mathbf{v}|^2} \right)^n \dfrac{(-1)^n (2n!)}{2^{2n} n!} C_n
\]

One finds that via l'Hopital's rule, the right hand side of \eqref{app:nonconservative_growth_infinite1D_iteration} vanishes everywhere except at $v = 0$. 
Then, via gauge freedom one can choose $\psi'$ to be zero. Therefore:

\[
\psi = - \dfrac{\gamma}{|\mathbf{v}|} \sum\limits_{n=0}^{\infty} \left( \dfrac{q}{\frac{1}{2}m|\mathbf{v}|^2} \right)^n \dfrac{(-1)^n (2n!)}{2^{2n} n!} C_n
\]
\end{proof}

Next, we shall show that this has a closed form solution as a single integral function:

\begin{lemma}
\label{app:nonconservative_growth_infinite1D_closedform}
The following function:

\[
\psi = - \dfrac{\gamma}{|\mathbf{v}|} \sum\limits_{n=0}^{\infty} \left( \dfrac{q}{\frac{1}{2}m|\mathbf{v}|^2} \right)^n \dfrac{(-1)^n (2n!)}{2^{2n} n!} C_n
\]

is representable in the form:

\[
\psi = - \gamma \sqrt{\dfrac{m}{2}} \int\limits_{\mathbf{x}_{(0)}}^{\mathbf{x}}\dfrac{q\phi(\mathbf{x}')}{\sqrt{\epsilon_{(0)}(\mathbf{x},\mathbf{v}) - q \phi(\mathbf{x}')}} \dfrac{\mathbf{v}}{|\mathbf{v}|} \cdot \rmd \mathbf{x}'
\]

where $\epsilon_{(0)}$ is defined as:

\[
\epsilon_{(0)}(\mathbf{x},\mathbf{v}) = q\phi(\mathbf{x}) + \dfrac{1}{2}m|\mathbf{v}|^2
\]
\end{lemma}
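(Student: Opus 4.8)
The plan is to read the stated series as the binomial expansion of the integrand in \eqref{eq:generating_growth_single_psi}, integrated term by term. The key observation is that
\[
\epsilon_{(0)}(\mathbf{x},\mathbf{v}) - q\phi(\mathbf{x}') = \tfrac{1}{2}m|\mathbf{v}|^{2}\left(1 + z(\mathbf{x}')\right),
\qquad z(\mathbf{x}') := \frac{q\phi(\mathbf{x}) - q\phi(\mathbf{x}')}{\tfrac{1}{2}m|\mathbf{v}|^{2}},
\]
so that the integrand equals $\sqrt{2/(m|\mathbf{v}|^{2})}\,q\phi(\mathbf{x}')\,(1+z)^{-1/2}$. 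First I would invoke the generalized binomial series $(1+z)^{-1/2} = \sum_{n\ge 0}\binom{-1/2}{n}z^{n}$, valid wherever $|z|<1$ along the integration path, together with the elementary identity $\binom{-1/2}{n} = (-1)^{n}(2n)!/(2^{2n}(n!)^{2})$.

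Next I would expand $\bigl(q\phi(\mathbf{x})-q\phi(\mathbf{x}')\bigr)^{n}$ by the binomial theorem and absorb the leftover factor $q\phi(\mathbf{x}')$, so that the $n$-th term of the integrand is a finite sum of monomials $\phi(\mathbf{x})^{n-l}\phi(\mathbf{x}')^{l+1}$ with $l=0,\dots,n$. On the compact orbit segment from $\mathbf{x}_{(0)}$ to $\mathbf{x}$ the $n$-th summand is bounded by a constant times $|z|^{n}$, so the series converges uniformly and may be integrated term by term; the line integral of $\phi(\mathbf{x}')^{l+1}\,\mathbf{v}/|\mathbf{v}|\cdot\rmd\mathbf{x}'$ is exactly $I_{l+1}(\mathbf{x})$. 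It then remains to collect the constant prefactor $-\gamma\sqrt{m/2}\cdot\sqrt{2/(m|\mathbf{v}|^{2})} = -\gamma/|\mathbf{v}|$, the powers $\bigl(q/\tfrac{1}{2}m|\mathbf{v}|^{2}\bigr)^{n}$, and the factorials: using $\binom{n}{l} = n!/\bigl(l!(n-l)!\bigr)$ to reassemble the sum over $l$ into $C_{n}$, and the identity for $\binom{-1/2}{n}$ above to produce the coefficient $(-1)^{n}(2n)!/(2^{2n}n!)$, returns precisely the series of the preceding lemma. This matching is routine bookkeeping and also pins down the normalisation of $I_{n}$ relative to the $q$-powers.

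The main obstacle is analytic rather than combinatorial: the resummation is legitimate only on the subdomain where $|z|<1$ holds along the whole path — a region contained in, but in general smaller than, $q\phi(\mathbf{x})+\tfrac{1}{2}m|\mathbf{v}|^{2}>q|\phi|$ — so I would state the equality there and note, consistently with the remarks following \eqref{eq:generating_growth_single_psi}, that the single integral furnishes the analytic continuation, with the separatrix $q\phi(\mathbf{x})+\tfrac{1}{2}m|\mathbf{v}|^{2}=q|\phi|$ as the branch locus where $\psi$ diverges and past which it becomes imaginary. A lighter alternative, avoiding convergence questions altogether, is to differentiate the closed form under the integral sign using the vector Leibniz rule of \appref{app:leibniz} and $\nabla I_{n} = (\mathbf{v}/|\mathbf{v}|)\phi^{n}$, check that it satisfies \eqref{eq:generating_growth_single_eq} (the $\nabla_{\mathbf{v}}$ piece entering through $\nabla_{\mathbf{v}}\epsilon_{(0)} = m\mathbf{v}$) up to a gauge term, and invoke the fact that the preceding lemma's series solves the same first-order equation; uniqueness up to gauge then gives the claim.
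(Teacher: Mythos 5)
Your proposal is correct, but it runs the computation in the opposite direction from the paper. The paper's proof takes the double series as the starting object: it swaps the order of summation ($\sum_{n}\sum_{l=0}^{n}\to\sum_{l}\sum_{n\geq l}$), rewrites $\phi^{n-l}/(n-l)!$ as $\tfrac{1}{n!}(\rmd/\rmd\phi)^{l}\phi^{n}$, resums over $n$ into $(1+R)^{-1/2}$ with $R=q\phi/(\tfrac12 m|\mathbf{v}|^{2})$, evaluates the $l$-fold derivative of that binomial, and then performs a second resummation over $l$ under the integral sign to build the closed form. You instead start from the closed form, factor $\epsilon_{(0)}-q\phi(\mathbf{x}')=\tfrac12 m|\mathbf{v}|^{2}(1+z)$, expand $(1+z)^{-1/2}$ once with the generalized binomial series and $(\phi(\mathbf{x})-\phi(\mathbf{x}'))^{n}$ with the ordinary binomial theorem, and integrate term by term to recover the $I_{l+1}$ and hence $C_{n}$. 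Both routes rest on the same fractional binomial identity, but yours is more elementary (no interchange of infinite sums, no derivative-in-$\phi$ trick) and, unlike the paper, makes the analytic caveat explicit: term-by-term integration is justified by uniform convergence where $|z|<1$ along the path, with the single integral providing the continuation up to the singular locus $\epsilon_{(0)}=q|\phi|$, consistent with the remarks after \eqref{eq:generating_growth_single_psi}. Your coefficient bookkeeping, using the correct $\binom{-1/2}{n}=(-1)^{n}(2n)!/(2^{2n}(n!)^{2})$ and $\binom{n}{l}=n!/(l!(n-l)!)$, will also surface some typographical slips in the paper's stated series (the $l$ in $C_{n}$ should be $l!$, the prefactor should carry $(n!)^{2}$, and there is a stray factor of $q$ between the series and the $q\phi(\mathbf{x}')$ numerator), which is worth noting rather than silently absorbing. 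Your lighter fallback---differentiating under the integral via \appref{app:leibniz}, checking \eqref{eq:generating_growth_single_eq}, and appealing to uniqueness up to a gauge function constant along unperturbed orbits---is also sound and mirrors the gauge argument the paper itself uses in its first lemma, though it verifies that both expressions solve the same equation rather than exhibiting the resummation directly.
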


\begin{proof}
First, to switch the order of the summation:

\[
\begin{array}{r c l c r c l}
n = 0 &;& l \in [0,0] &\to& l = 0 &:& n \in [0,\infty] \\
n = 1 &;& l \in [0,1] &\to& l = 1 &:& n \in [1,\infty] \\
n = 2 &;& l \in [0,2] &\to& l = 2 &:& n \in [2,\infty] \\
\vdots && \vdots && \vdots && \vdots
\end{array}
\]

Therefore one finds that switching the order of summation requires:

\[
\sum\limits_{n=0}^{\infty} \sum\limits_{l=0}^{n} \to \sum\limits_{l=0}^{\infty} \sum\limits_{n=l}^{\infty} 
\]

With this information, one can evaluate the $n$ sum first:

\[
\psi = - \dfrac{\gamma}{|\mathbf{v}|} \sum\limits_{l=0}^{\infty} \sum\limits_{n=l}^{\infty} \left( \dfrac{q}{\frac{1}{2}m|\mathbf{v}|^2} \right)^n \dfrac{(-1)^n (2n!)}{2^{2n} n!} \dfrac{(-1)^l}{l(n-l)!} \phi^{n-l} I_{l+1}
\]

This allows one to find a function of $\phi$ which is \emph{independent} of integration. Noting that:

\begin{align*}
\dfrac{\phi^{n-l}}{(n-l)!} &= \dfrac{(n-l+1)(n-l+2) \dots (n)}{n!} \phi^{n-l} \\
&= \dfrac{\phi^{n-l}}{n!} \prod\limits_{r=1}^{l} (n-l+r) \\
&= \dfrac{1}{n!} \left( \dfrac{\rmd}{\rmd \phi} \right)^l \phi^n
\end{align*}

it is possible to show that:

\begin{align*}
\psi &= - \dfrac{\gamma}{|\mathbf{v}|} \sum\limits_{l=0}^{\infty} \dfrac{(-1)^l}{l!} I_{l+1} \left( \dfrac{\rmd}{\rmd \phi} \right)^l \sum\limits_{n=l}^{\infty} R^n \dfrac{(-1)^n (2n!)}{2^{2n} (n!)^2} \\
&=- \dfrac{\gamma}{|\mathbf{v}|} \sum\limits_{l=0}^{\infty} \dfrac{(-1)^l}{l!} I_{l+1} \left( \dfrac{q}{\frac{1}{2}m|\mathbf{v}|^2} \dfrac{\rmd}{\rmd R} \right)^l \sum\limits_{n=0}^{\infty} R^n \dfrac{(-1)^n (2n!)}{2^{2n} (n!)^2}
\end{align*}

where we have used the shorthand:

\[
R := \dfrac{q \phi}{\frac{1}{2}m|\mathbf{v}|^2}
\]

This quantity has physical significance; it is the ratio of the potential energy and kinetic energy of the particle. The sum over $n$ is nothing more than a fractional binomial expansion:

\[
\sum\limits_{n=0}^{\infty} R^n \dfrac{(-1)^n (2n!)}{2^{2n} (n!)^2} = (1 + R)^{-0.5}
\]

and therefore one now finds:

\[
\psi = - \dfrac{\gamma}{|\mathbf{v}|} \sum\limits_{l=0}^{\infty} \dfrac{(-1)^l}{l!} I_{l+1} \left( \dfrac{q}{\frac{1}{2}m|\mathbf{v}|^2} \dfrac{\rmd}{\rmd R} \right)^l (1 + R)^{0.5}
\]

By induction, one finds:

\[
\begin{array}{r l}
\left( \dfrac{\rmd}{\rmd R} \right)^l (1 + R)^{0.5} &= \left( \dfrac{\rmd}{\rmd R} \right)^{l-1} (1 + R)^{-1.5} \left( \dfrac{-1}{2} \right) \\
&= \left( \dfrac{\rmd}{\rmd R} \right)^{l-2} (1 + R)^{-2.5} \left( \dfrac{-1}{2} \right) \left( \dfrac{-3}{2} \right)\\
&= (1 + R)^{-l - 0.5} \dfrac{(-1)^l (2l)!}{2^{2l} l!}
\end{array}
\]

and therefore:

\[
\psi = - \dfrac{\gamma}{|\mathbf{v}|} \sum\limits_{l=0}^{\infty}  I_{l+1} \left( \dfrac{q}{\frac{1}{2}m|\mathbf{v}|^2} \right)^l (1 + R)^{-l - 0.5} \dfrac{(2l)!}{2^{2l} (l!)^2}
\]

It is possible to combine terms by noting the following:

\[
\dfrac{(1 + R)^{-l-0.5}}{(\frac{1}{2}m|\mathbf{v}|^2)^l} = \sqrt{\dfrac{m|\mathbf{v}|^2}{2 \epsilon_{(0)}}} (\epsilon_{(0)})^{-l}
\]

Therefore, $\psi$ now takes the form:

\[
\begin{array}{r l}
\psi &= - \gamma \sqrt{ \dfrac{m}{2 \epsilon_{(0)}} } \displaystyle\sum\limits_{l=0}^{\infty} \left(\dfrac{q}{\epsilon_{(0)}} \right)^l  I_{l+1}  \dfrac{(2l)!}{2^{2l} l!} \\
&= - \gamma \sqrt{ \dfrac{m}{2 \epsilon_{(0)}} } \displaystyle\int\limits_{\mathbf{x}_{(0)}}^{\mathbf{x}} \left\{\phi(\mathbf{x}') \sum\limits_{l=0}^{\infty} \left(\dfrac{q \phi(\mathbf{x}')}{\epsilon_{(0)}} \right)^l  \dfrac{(2l)!}{2^{2l} l!} \right\} \dfrac{\mathbf{v}}{|\mathbf{v}|} \cdot \rmd \mathbf{x}'
\end{array}
\]

Finally one finds that:

\[
\sum\limits_{l=0}^{\infty} \left(\dfrac{q \phi(\mathbf{x}')}{\epsilon_{(0)}}\right) \dfrac{(2l!)}{2^{2l} (l!)^2} = \left(1 - \dfrac{q \phi(\mathbf{x}')}{\epsilon_{(0)}}\right)^{-0.5}
\]

And therefore one arrives at the closed form solution:

\[
\psi = - \gamma \sqrt{\dfrac{m}{2}} \int\limits_{\mathbf{x}_{(0)}}^{\mathbf{x}}\dfrac{q\phi(\mathbf{x}')}{\sqrt{\epsilon_{(0)}(\mathbf{x},\mathbf{v}) - q \phi(\mathbf{x}')}} \dfrac{\mathbf{v}}{|\mathbf{v}|} \cdot \rmd \mathbf{x}'
\]
\end{proof}

\section{Leibniz path integral rule}
\label{app:leibniz}
One can extend the Leibniz integral rule to path integrals. Suppose that one starts with the following integral:

\[
I := \int\limits_{\mathbf{a}(\mathbf{r})}^{\mathbf{b}(\mathbf{r})} \mathbf{f}(\mathbf{r},\mathbf{t}) \cdot \rmd \mathbf{t} \equiv \sum\limits_j \int\limits_{a_j(\mathbf{r})}^{b_j(\mathbf{r})} f_j(\mathbf{r},\mathbf{t}) \rmd t_j
\]

This integral is a path integral between two points on the $D$-dimensional manifold $\mathbb{R}^D$ where the position on the manifold is given by $\mathbf{r} = \sum\limits_i r_i \mathbf{e}^{i}$. Suppose that one acts on $I$ with a vector differential operator $\hat{\mathbf{O}}$:

\[
\hat{\mathbf{O}} I = \sum\limits_j\left[ \hat{\mathbf{O}} (b_j (\mathbf{r})) f_j(\mathbf{r},\mathbf{b}) - \hat{\mathbf{O}} (a_j (\mathbf{r})) f_j(\mathbf{r},\mathbf{a}) + \int\limits_{a_j(\mathbf{r})}^{b_j(\mathbf{r})} \hat{\mathbf{O}}f_j(\mathbf{r},\mathbf{t}) \rmd t_j\right]
\]

which can be alternatively represented by using dyads:

\[
\hat{\mathbf{O}} I = \hat{\mathbf{O}} \mathbf{b}(\mathbf{r}) \cdot \mathbf{f}(\mathbf{r},\mathbf{b}) - \hat{\mathbf{O}} \mathbf{a}(\mathbf{r}) \cdot \mathbf{f}(\mathbf{r},\mathbf{a}) + \int\limits_{\mathbf{a}(\mathbf{r})}^{\mathbf{b}(\mathbf{r})} \hat{\mathbf{O}} \mathbf{f}(\mathbf{r},\mathbf{t}) \cdot \rmd \mathbf{t}
\]

\small
\bibliographystyle{unsrt}
\bibliography{main}

\end{document}